\newtheoremstyle{cited}%
  {3pt}
  {3pt}
  {\itshape}
  {}
  {\bfseries}
  {.}
  {.5em}
  {\thmname{#1} \thmnumber{#2} \thmnote{\normalfont#3}}
\theoremstyle{cited}
\newtheorem{citedthm}{Theorem}
\theoremstyle{plain}
\newtheorem{lemma}{Lemma}
\newtheorem{theorem}{Theorem}
\theoremstyle{definition}
\newtheorem{definition}{Definition}
\renewcommand{\P}{\mathbb{P}}
\newcommand{\R}{\mathbb{R}}
\newcommand{\E}{\mathbb{E}}
\newcommand{\1}{\mathbb{1}} 
\newcommand{\cD}{\mathcal{D}}
\newcommand{\cN}{\mathcal{N}}
\newcommand{\cU}{\mathcal{U}}
\newcommand{\cM}{\mathcal{M}}
\newcommand{\PercepOnline}{\textsc{Classical online perceptron}\xspace}
\newcommand{\QPercepOnline}{\textsc{Online quantum perceptron}\xspace}
\newcommand{\QPercepSpace}{\textsc{Version space quantum perceptron}\xspace}
\newcommand{\QPercep}{\textsc{Hybrid quantum perceptron}\xspace}
\newcommand\ackname{Acknowledgements}
  \newenvironment{acknowledgements}{%
      \titlepage
      \null\vfil
      \@beginparpenalty\@lowpenalty
      \begin{center}%
        \bfseries \ackname
        \@endparpenalty\@M
      \end{center}}%
     {\par\vfil\null\endtitlepage}
  \newenvironment{acknowledgements}{%
      \if@twocolumn
        \section*{\abstractname}%
      \else
        \small
        \begin{center}%
          {\bfseries \ackname\vspace{-.5em}\vspace{\z@}}%
        \end{center}%
        \quotation
      \fi}
      {\if@twocolumn\else\endquotation\fi}
\title{Quantum Perceptron Revisited: Computational-Statistical Tradeoffs}
\author{
  Mathieu Roget, 
  Aix-Marseille Université, École Normale Superieure de Lyon, France\\
  \texttt{mathieu.roget@ens-lyon.org} \\
  Giuseppe Di Molfetta
  Aix-Marseille Université, CNRS, France\\
  \texttt{giuseppe.dimolfetta@lis-lab.fr}\\
  Hachem Kadri
  Aix-Marseille Université, CNRS, France\\
  \texttt{hachem.kadri@lis-lab.fr}
  }
 \date{}
\begin{document}

\maketitle

\begin{abstract}
Quantum machine learning algorithms could provide significant speed-ups over their classical counterparts; however, whether they could also achieve good generalization remains unclear.
Recently, two quantum perceptron models which give a quadratic improvement over the classical perceptron algorithm using Grover’s search have been proposed by Wiebe et al.~\cite{wiebe2016quantum}. While the first model reduces the complexity with respect to the size of the training set, the second one improves the bound on the number of mistakes made by the perceptron.
In this paper, we introduce a hybrid quantum-classical perceptron algorithm with lower complexity and better generalization ability than the classical perceptron. We show a quadratic improvement over the classical perceptron in both the number of samples and the margin of the data.
We derive a bound on the expected error of the hypothesis returned by our algorithm, which compares favorably to the one obtained with the classical online perceptron.
We use numerical experiments to illustrate the trade-off between computational complexity and statistical accuracy in quantum perceptron learning and discuss some of the key practical issues surrounding the implementation of quantum perceptron models into near-term quantum devices, whose practical implementation
represents a serious challenge due to inherent noise.
However, the potential benefits make correcting this worthwhile.
\end{abstract}

\section{Introduction}

\begin{table*}[t]
\centering
\caption{Summary of the computational complexities and the expected risk bounds of the classical online perceptron and the quantum perceptron models.}
\begin{tabular}{p{7cm} p{4.3cm} p{4.6cm}}
  Algorithm   &  \ Complexity & Expected risk\\
  \hline
    \PercepOnline~[e.g., \citealp{mohri2018foundations}] & \   $\displaystyle O\left(\frac{N}{\gamma^2}\right)$ & \hspace{-0cm}$\displaystyle\leq  \underset{S\sim \cD^{N+1}}{\E}\left(\frac{\min(M(S),\frac{1}{\gamma_S^2})}{N+1}\right)$\\
    
     & & \\
     
    \QPercepOnline~\citep{wiebe2016quantum} & \ $\displaystyle  O\left(\frac{\sqrt{N}}{\gamma^2}\ln\left(\frac{1}{\epsilon\gamma^2}\right)\right)$ & n/a 
    \\
    
     & & \\

    \QPercepSpace~\citep{wiebe2016quantum} & \ $\displaystyle O\left(\frac{N}{\sqrt{\gamma}}\ln^{3/2}1/\epsilon\right) $ & n/a
    \\
    
     & & \\
    
    \QPercep (this work) & \ $\displaystyle  O\!\left(\frac{\sqrt{N}}{\gamma}\ln(1/\epsilon)\ln\left(\frac{1}{\gamma\epsilon}\right)\!\right)$ & $\displaystyle\leq\underset{S\sim \cD^{N+1}}{\E}\!\left(\sqrt{\frac{\pi}{2}}\frac{\ln 1/\epsilon}{N+1}\frac{1}{\gamma_S}\right)$\\
\end{tabular}
\label{tbl:complexities}
\end{table*}

Quantum machine learning is an attractive field of research that contributes towards addressing the need for computationally efficient machine learning (ML) algorithms capable of handling huge amounts of data~\citep{wittek2014quantum,biamonte2017quantum,ciliberto2018quantum,schuld2018supervised,dunjko2020nonreview}.
Previous works in the field have mainly investigated machine learning tasks when a quantum information processing device is used, showing that a significant speed-up can be achieved compared to classical ML algorithms~\citep{rebentrost2014quantum,kerenidis2019q,kerenidis2020quantum,arunachalam2020quantum,ma2021quantum}. Quantum computing
promises the ability to solve intractable ML problems by harnessing quantum effects such as superposition and entanglement.

Quantum superposition, a fundamental concept in quantum computing, is the means by which quantum algorithms like Grover’s search can outperform classical ones. Ordinary computers operate with states built from a finite number of bits. Each bit may exist in one of the two states, 0 or 1. A quantum computer works with a finite set of objects called qubits. Each qubit has two separate states, also denoted by 0 and 1, but a qubit can also be in what is called a ``quantum superposition'' of these states, in which it is, in some sense, both 0 and 1 simultaneously~\citep{nielsen2002quantum}.
Grover’s algorithm is one of the most famous algorithm in quantum computing~\citep{grover1996fast, roget2020grover}. It solves the problem of finding one item from an unstructured database of $N$ items in time $O(\sqrt{N})$, so beating the classical $O(N)$ time requirement. Recent works have investigated the use of Grover’s search algorithm to enhance machine learning and have proved its ability of providing computational speed-up over classical ML algorithms~\citep{aimeur2013quantum, wittek2014quantum, wiebe2016quantum,li2019sublinear, casale2020quantum}.
Beyond Grover’s algorithm, quantum algorithms for linear algebra, such as quantum matrix inversion and quantum singular value decomposition, were developed and used in the context of machine learning~\citep{rebentrost2014quantum,kerenidis2017quantum}.
Among the quantum-enhanced ML algorithms that were proposed in the literature, quantum perceptron models in particular attracted our attention because it has been shown that they could enable non-trivial improvements not only in the computational complexity but also in the statistical performance of the  perceptron~\citep{wiebe2016quantum}. This may support the~(beneficial) effect of quantum computations on generalization performance.

In~\citet{wiebe2016quantum}, two quantum perceptron models based on Grover’s search algorithm were introduced. The first one (namely \QPercepOnline) is a quantum extension of the classical online perception algorithm. The complexity of the online quantum perceptron with respect to the number of examples $N$ is $O(\sqrt{N})$, which is a quadratic improvement over the classical perceptron. However, no improvement in the number of updates made by the perceptron was achieved, as its mistake bound is $O(1/\gamma^2)$, where $\gamma$ is the margin, which is the same as in the classical case.
The second quantum perceptron model (namely \QPercepSpace) is based on the notion of version space~\citep{herbrich2001bayes,mitchell1982generalization} and has a mistake bound of $O(1/\sqrt{\gamma})$, which is a substantial improvement over the classical online perceptron. Yet, as with the classical perceptron, the computational complexity of the algorithm is linear in $N$.
Hence, the question arises whether it is possible to design quantum algorithms for perceptron learning that enjoy the best features of both types of quantum perceptron models. In other words, can we develop a quantum perceptron algorithm that provides improvements in both the computational complexity and the number of mistakes the perceptron makes?

This paper provides, to the best of our knowledge, the first perceptron learning algorithm that has lower complexity and better generalization ability than the well-known classical online perceptron. Specifically, we make the following contributions: 
i) we introduce a hybrid quantum-classical perceptron algorithm (namely \QPercep) that performs a quantum search over the training set for randomly generated linear separators in order to find one that lies in the version space;
ii) we show a quadratic improvement over the classical perceptron in both the number of samples and the margin of the data; 
iii) we derive a bound on the expected error of the hypothesis returned by our algorithm that compares favorably to the one obtained with the classical online perceptron;
iv) we use numerical experiments to illustrate the trade-off between computational complexity and statistical accuracy in quantum perceptron learning and discuss some of the key practical issues surrounding the implementation of quantum perceptron models into near-term quantum devices, whose practical implementation represents a serious challenge due to inherent noise.
Our theoretical results for Quantum Perceptron  and other related works are summarized in Table~\ref{tbl:complexities}.

\section{Preliminaries}
\label{sec:preliminaries}

We begin with reviewing the classical perceptron algorithm and then give some background on quantum computing and Grover’s search algorithm.

\subsection{Classical perceptron algorithm}
\paragraph*{Algorithm and complexity}

The perceptron is an online algorithm designed to solve binary classification problems~\citep{rosenblatt1958perceptron}. It has received a lot of attention due to its simplicity and versatility~\citep{cesa2005second,freund1999large,shalev2005new}. 
Consider a training set $\{(x_1,y_1), . . . ,(x_N,y_N)\}$ with data vectors $x_i \in \R^D$ and class labels $y_i \in \{-1,1\}$, $i=1,\ldots, N$.
We assume that the data are linearly separable, i.e., there exists a hyperplane that separates the data points of the class $1$ from those of the class $-1$. The \PercepOnline will find a separator $w\in \R^D$ such that $\forall i, \; y_i w^T x_i\geq 0$. The algorithm simply updates the vector $w$ each times it misclassifies a point. The \PercepOnline is depicted in Algorithm \ref{classical_perceptron}. 

The margin $\gamma$ between the two classes is defined by:
\begin{equation*}
    \gamma = \max_{v\in\R^D}\min_{1\leq i\leq N}\frac{y_i \langle v, x_i\rangle }{\Vert v\Vert}.
\end{equation*}
Usually, the margin is small (close to zero) which means that the classes are close and separating them is hard. In the following, we always assume that the margin is smaller than one (which can be achieved by normalizing the training set) and the asymptotic complexities are studied when $N$ and $\frac{1}{\gamma}$ are large. 
When the norm of the $x_i$'s is at most $1$, it holds that the number of updates made by the perceptron during the learning phase is at most $O(\frac{1}{\gamma^2})$. This result is known as the bound of Novikoff \citep{novikoff1962convergence, mohri2013perceptron}. If we want to correctly classify all the $N$ samples, the final complexity of the perceptron is then $O(\frac{N}{\gamma^2})$.

\paragraph*{Generalization}

One of the most fundamental questions in Machine Learning is what are the generalization guarantees of a learning algorithm.
The perceptron algorithm learns a mapping
between input data and target labels using a finite sample of labeled examples, and then uses a hyperplane to separate the data and predict the class of unseen examples.
It is therefore important to assess the ability of the perceptron to generalize to unseen data.
In a statistical learning framework, such assessment is often performed by bounding the risk (or generalization error).
Let us denote by $\cD$ the distribution that 
generates the data. The training sample $S$ of $N$ data points $ \{(x_i,y_i)_{i=1}^N\}$ is assumed to be drawn randomly from the (unknown) distribution $\cD$ and we write $S \sim \cD^N$. The binary classification risk is defined by
%
%
%
%
 \begin{equation*}
    \label{eq:risk}
    R(h_S) = \underset{(x,y)\sim \cD}{\E}\left(\1\{h_{S}(x)\neq y\}\right)\; ,
\end{equation*}
where $h_S$ is the hypothesis returned by the algorithm on the sample $S$. 

\begin{citedthm}
\label{th:generalization_perceptron}
Assume that the data are linearly separable. Let $h_S$ be the hypothesis returned by the
\PercepOnline algorithm after training over a sample $S$ of size $N$ drawn according to some distribution $\cD$. We note $\gamma_S$ the margin of sample $S$. Then, the expected risk of $h_S$ is bounded as follows:
\begin{equation*}
    \E_{S\sim \cD^N}\left(R(h_S)\right) \leq \frac{1}{N+1}\E_{S\sim \cD^{N+1}}\left(\min(M(S),{1}/{\gamma_S^2})\right),
\end{equation*}
where $M(S)$ is the number of updates made by the algorithm after training over $S$.
\end{citedthm}
\begin{proof}
See \citep[Theorem 8.9]{mohri2018foundations}.
\end{proof}


\begin{algorithm}[ht]
\RestyleAlgo{ruled}
\caption{\PercepOnline}
\label{classical_perceptron}

\KwData{data $(x_i,y_i)_{1\leq i\leq N}$ \tcp*[r]{training set}}

    $w \gets 0$ \tcp*[r]{separator in $\R^D$}
    \While{$(x_t,y_t) \gets \textsc{Receive}()$ \tcp*[r]{data we receive}}{ 
            \If{$y_t w^T x_t\leq 0$\tcp*[r]{data wrongly classified}} {
                $w \gets w + y_t x_t$    \tcp*[r]{we update the separator}
            }
    }
    \Return{$w$} 

\end{algorithm}


\subsection{Quantum computation and Grover's search algorithm}

Before we introduce the quantum perceptron algorithm, we believe it is opportune to briefly present the principles of quantum mechanics, i.e. the underlying mathematical structure of all quantum physical systems. It is not possible to provide here a complete and exhaustive presentation, so we will limit ourselves to introduce only those “game rules” useful to understand the content of the following algorihtms, leaving it to the reader’s curiosity to more complete reviews, such as \citet{nielsen2002quantum}. At best, we first introduce the arena where the game goes on, then we define the dynamics of the quantum system and finally we shortly introduce the measurement operation. 

While classically a computational state takes value in $\{0,1\}$, a quantum state is represented by a unit complex vector $\ket{\psi}$ in the Hilbert space $\mathbb{C}^2$. Such state space is equipped by an orthonormal basis $\{\ket{0},\ket{1}\}$, such that any vector is generally described by a convex linear combination 
\begin{equation*}
\ket{\psi} = \alpha_0 \ket{0} + \alpha_1 \ket{1},
\end{equation*}
where $(\alpha_i)_{i=0,1}$ are complex numbers. More in general qubit basis states can also be combined to form product basis states to describe multi-qubits systems. If $\ket{\psi_1},\ket{\psi_2},...,\ket{\psi_n}$ represent the states of $n$ isolated quantum systems, the state of the composite system is given by the tensor product of the state space of the components : $\ket{\psi_1}\otimes \ket{\psi_2}\otimes...\otimes\ket{\psi_n}$. A concrete example of composite system is the memory of a $n-$qubit quantum computer, where each qubit is called register. In that case, \begin{equation*}
    \ket{\psi} = \sum_{i=0}^{2^n-1}{\alpha_i\ket{i}}.
\end{equation*} 
Similarly to classical computing, we can act by means of logical gates onto such quantum register to perform computation. Quantum circuits are nothing but reversible logical circuits onto complex-valued state space. Each quantum gate requires a special kind of reversible function, namely a unitary mapping, that is, a linear transformation of a complex inner product space that preserves the Hermitian inner product. When such systems are kept isolated, the computation is kept reversible. However we need to obtain classical information about the outcome of a quantum computation task. In practice a quantum state has to be measured which formally coincides with an orthogonal projector onto one of the computational basis state $\ket{v}\in \mathbb{C}^n$. During such measurement operation, the quantum state is randomly collapsed into a classical state, with probability $\displaystyle \P\left(v=i \mid v \overset{\text{Meas}}{\longleftarrow} \psi\right) = |\alpha_i|^2, \; \forall \ 0\leq i < 2^n$, where $v$ has been expressed in a decimal system.


\begin{algorithm}[t]
\caption{QSearch}
\label{qsearch}
\RestyleAlgo{ruled}
\KwData{data $\{x_i\}_{1\leq i\leq N}$ \tcp*[r]{data we want to search in}}
\KwData{oracle $f$ \tcp*[r]{oracle such that $f(x_i) = \1\{i\in \cM\}$}}

    $\displaystyle \psi_0 \gets \frac{1}{\sqrt{N}}\sum_{i=1}^N\ket{i}$\\
    $R \gets \textsc{Quantify}(f)$ \tcp*[f]{quantum version of the oracle}\\
    $U_g \gets GR$\\
    $\displaystyle m\gets \cU\left(\left\{0,\ldots,\left\lceil \frac{1}{\sin(2\sin^{-1}\left(\sqrt{\frac{1}{N}}\right))} \right\rceil -1\right\}\right)$\\
    $v \overset{\text{Meas}}{\longleftarrow} U_{g}^m\psi_0$\\
    \Return{$v$} 

\end{algorithm}


At the heart of the quantum perceptron algorithm lies the quantum search algorithm, which is widely used as main routine in many algorithms, generally guaranteeing to speed up any brute force $O(N)$ problem into a $O(\sqrt{N})$ problem. It has been introduced by~\citet{grover1996fast} as a fast quantum mechanical algorithm for database search algorithm and it represents one of the most important and studied algorithm in quantum computing. In the following, we shortly present the Grover algorithm.
Let us consider $N = 2^n$ elements and $\cM \subseteq \{1,\ldots,N\}$ the searched elements. We start with the diagonal quantum state 
\begin{equation*}
    \ket{\psi_0} = \frac{1}{\sqrt{N}}\sum_{i=1}^N\ket{i}\; .
\end{equation*}
We then apply two operators: an oracle and a reflection. The oracle $R$ is defined by
\begin{equation*}
    R\ket{x} = \left\{\begin{matrix} -\ket{x} & \text{if } x\in \cM\\ \ket{x} & \text{otherwise,}\\\end{matrix}\right.
\end{equation*}
while the reflection $G$ is given by
\begin{equation*}
    G = 2\psi \psi^\dagger - \1 \; .
\end{equation*}
This two operators can in fact be view in a geometric way. We note $\#\cM$ the cardinal of $\cM$. Let's denote $a = \frac{\#\cM}{N}$ the probability to find a searched element before running the algorithm (when the state is diagonal) and $\theta_a = \sin^{-1}\left(\sqrt{a}\right)$, the angle between the subspace composed by the searched elements and the complementary subspace. Then one can show that $U_{g} := GR$ is a rotation of an angle $2\theta_a$, meaning that after $j$ steps the probability to measure a searched element is
\begin{equation*}
    \P\left(v\in \cM \mid v \overset{\text{Meas}}{\longleftarrow} U_{g}^j\psi_0\right) = \sin^2\big((2j+1)\theta_a \big)\; .
\end{equation*}
It is then easy to find the number of steps that gives the optimal probability of finding a searched element. But to find this optimal number of steps, one needs to know $\theta_a$ which is directly related to the number of searched elements. We want here to adapt this algorithm in order to make it for an unknown number of searched elements. 

The idea here that comes from~\citet{boyer1998tight} is simply to draw the number of steps randomly uniformly between $0$ and $M-1$. The resulting probability is 
\begin{align*}
    &\P\left(v\in \cM \mid v \overset{\text{Meas}}{\longleftarrow} U_{g}^m\psi_0, m\gets \cU_{\{0,\ldots,M-1\}}\right) \\  &=\frac{1}{M}\sum_{j=0}^{M-1}{\sin^2\big((2j+1)\theta_a \big)}
     = \frac{1}{2}\left(1-\frac{\sin(4M\theta_a)}{2M\sin(2\theta_a)}\right)\; .
\end{align*}
If $M\geq \frac{1}{\sin(2\theta_a)}$, then it holds that this probability is at least $\frac{1}{4}$. The last thing we need is to express a bound for $M$ that doesn't depend on $\theta_a$:
\begin{align*}
    M \geq \frac{1}{\sin(2\theta_a)} &=\frac{1}{\sin(2\sin^{-1}\left(\sqrt{\frac{\#\cM}{N}}\right))}\\
    &\leq \frac{1}{\sin(2\sin^{-1}\left(\sqrt{\frac{1}{N}}\right))}  = O(\sqrt{N})\; .
\end{align*}
In other words, we bound $M$ by its maximum value which occurs when $\#\cM=1$ (i.e. one marked element).
The detailed quantum search over an unknown number of searched elements is given in Algorithm~\ref{qsearch}. This algorithm find a searched element with probability at least $\frac{1}{4}$ and has a complexity $O(\sqrt{N})$.
By repeating the algorithm a logarithmic number of times, we can increase the probability of success to $1-\epsilon$ for any $\epsilon > 0$~\citep{wiebe2016quantum}.

\section{Existing quantum perceptron algorithms}

In this section, we discuss two existing quantum perceptron algorithms proposed in~\citet{wiebe2016quantum} that are closely related to our work.
Note that other quantum perceptron models can  be  found  in  the  literature  of  quantum  neural  networks~\citep{behrman2000simulations,ricks2003training,schuld2015simulating}.

\subsection{Online quantum perceptron}

\begin{algorithm}[t]
\caption{\QPercepOnline~\citep{wiebe2016quantum}}
\label{alg:online_quantum_perceptron}
\RestyleAlgo{ruled}
\KwData{data $(x_i,y_i)_{1\leq i\leq N}$ \tcp*[r]{training set}}

    $w \gets 0$ \tcp*[r]{separator in $\R^D$}
    \For{$i\in \{1,\ldots,1/\gamma^2\}$ \tcp*[r]{we perform enough updates}} {
        \For{$j\in \{1,\ldots,\left\lceil\log_{3/4}(\gamma^2\epsilon)\right\rceil\}$ \tcp*[r]{we increase the probability of \textsc{QSearch}}} {
            $m \gets \textsc{QSearch}(\{(x_k,y_k)\}_k)$ \tcp*[r]{searching for a point $x_m$ misclassified by $w$}
            \If{$y_m w_i^T x_m \leq 0$ \tcp*[r]{If actually misclassified\ldots}} {
                $w \gets w + y_m x_m$ \tcp*[r]{... then update}
            }
        }
    }
    \Return{$w$} 

\end{algorithm}

The classical online Perceptron updates the hyperplane when an example is misclassified  and stops when all training data are correctly classified.
The online quantum perceptron works similarly to the classical one. The main difference is the means by which misclassified points are detected.
Instead of testing each point one by one, a Grover search is
performed to find a wrongly classified example. Once this is done, the hyperplane is updated and the process is repeated until convergence.
the \QPercepOnline is outlined in Algorithm~\ref{alg:online_quantum_perceptron}.
Note that this algorithm is not really an online algorithm since it considers a quantum superposition of states representing the training data samples.
The naming ‘online' quantum perceptron is used because this algorithm has the same update rule than the classical online perceptron. 
In this quantum version of the perceptron, the computational complexity is improved from $O(N)$ to $O(\sqrt{N})$ due to the Grover search. However, an additional $\log \left(1/(\epsilon\gamma^2)\right)$ will appear to deal with the probability of failure of the quantum search. This is summarized in the theorem below.

\begin{citedthm}[\citep{wiebe2016quantum}]
Let $S$ be a linearly separable sample of $N$ points of margin $\gamma$. Algorithm \QPercepOnline finds a perfect separator with probability at least $1-\epsilon$ and has a complexity of
\begin{equation*}
    O\left(\frac{\sqrt{N}}{\gamma^2}\log\left(\frac{1}{\epsilon\gamma^2}\right)\right)\; .
\end{equation*}
\end{citedthm}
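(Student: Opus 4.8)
The plan is to combine three ingredients: Novikoff's bound controlling the number of updates, the success probability of a single \textsc{QSearch} call, and the amplification achieved by the inner loop, then multiply out the per-call cost $O(\sqrt N)$ to get the total complexity. First I would observe that between two consecutive updates the separator $w$ is fixed, so the set $\cM$ of indices $k$ with $y_k w^T x_k \le 0$ is a well-defined subset of $\{1,\dots,N\}$, and the oracle $f(x_k) = \1\{k \in \cM\}$ used by \textsc{QSearch} is exactly the misclassification test. As long as $w$ is not yet a perfect separator, $\cM \neq \emptyset$, so by the guarantee of Algorithm~\ref{qsearch} a single \textsc{QSearch} call returns an index in $\cM$ with probability at least $\tfrac14$, at cost $O(\sqrt N)$.

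Next I would handle the inner loop, whose purpose is amplification. If in a given outer iteration $\cM \neq \emptyset$, then the probability that \emph{all} $J := \lceil \log_{3/4}(\gamma^2\epsilon)\rceil$ independent \textsc{QSearch} calls fail to return a misclassified point is at most $(3/4)^J \le \gamma^2 \epsilon$. Hence, conditioned on $w$ still being imperfect, the outer iteration performs a genuine perceptron update with probability at least $1 - \gamma^2\epsilon$. By Novikoff's bound, once $\lceil 1/\gamma^2\rceil$ updates have been made the algorithm holds a perfect separator; so it suffices that every one of the $\lceil 1/\gamma^2 \rceil$ outer iterations in which $w$ is still imperfect succeeds in producing an update. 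A union bound over these at most $1/\gamma^2$ iterations gives an overall failure probability at most $\tfrac{1}{\gamma^2}\cdot \gamma^2\epsilon = \epsilon$, so the algorithm returns a perfect separator with probability at least $1-\epsilon$, as claimed.

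For the complexity, the algorithm runs $\lceil 1/\gamma^2\rceil$ outer iterations, each containing $\lceil \log_{3/4}(\gamma^2\epsilon)\rceil = O(\log(1/(\epsilon\gamma^2)))$ inner iterations, each dominated by one \textsc{QSearch} call of cost $O(\sqrt N)$ (the misclassification check $y_m w^T x_m \le 0$ and the update $w \gets w + y_m x_m$ are $O(D)$, absorbed into the model's cost convention). Multiplying yields the stated bound $O\!\left(\tfrac{\sqrt N}{\gamma^2}\log\tfrac{1}{\epsilon\gamma^2}\right)$.

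The main subtlety — and the step I would be most careful about — is the accounting of "bad" outer iterations. In iterations where $w$ is \emph{already} a perfect separator, $\cM = \emptyset$ and \textsc{QSearch} simply returns some index that fails the misclassification test, so no spurious update occurs and these iterations are harmless; the union bound must therefore be taken only over the (at most $1/\gamma^2$) iterations where an update is genuinely needed. One should also note a minor point about $\log_{3/4}(\gamma^2\epsilon)$: since $\gamma^2\epsilon < 1$ its logarithm base $3/4 < 1$ is positive, so the ceiling is a well-defined positive integer, and $(3/4)^{\lceil \log_{3/4}(\gamma^2\epsilon)\rceil} \le \gamma^2\epsilon$ as used above. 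Everything else is routine.
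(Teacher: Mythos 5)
Your proof is correct, and it is essentially the canonical argument: the paper itself does not prove this statement (it is imported as a cited theorem from Wiebe et al.), and the paper's own proof of Theorem 3 for the hybrid perceptron follows exactly the template you use here---amplify each \textsc{QSearch} call to failure probability $(3/4)^{J}$, take a union bound over the outer iterations, and multiply by the $O(\sqrt{N})$ cost per search. Your treatment of the two real subtleties (outer iterations where $w$ is already perfect produce no spurious update because of the classical check $y_m w^T x_m \leq 0$, and Novikoff's bound limits the iterations that genuinely need an update to $O(1/\gamma^2)$) is precisely what makes the union bound sound; the only cosmetic slack is that $\lceil 1/\gamma^2\rceil\,\gamma^2\epsilon$ can slightly exceed $\epsilon$, which is absorbed by adjusting the constant inside the logarithm.
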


\subsection{Version space quantum perceptron}

The idea of the second quantum perceptron model is based on the notion of version space, which is the set of hypotheses that are consistent with the training data~\citep{herbrich2001bayes}.
Here, $K$ linear separators are randomly drawn from the normal distribution $\cN(0,\1)$, so the problem becomes how to find one of these separators that is in the version space, i.e., correctly separates the data.
Using a version space point of view, the perceptron learning problem is transformed into a search problem and then quantum search algorithms can be used to solve it efficiently.
 The Grover search is now applied over the generated hyperplanes and not the training set as in the previous algorithm (see Algorithm~\ref{alg:version_space_quantum_perceptron}).
 A significant improvement on the number of hyperplanes $K$ is achieved; however, a full pass over the training examples is needed to find the hyperplane that belongs to the version space.
 The computational complexity of the algorithm is $O(N\sqrt{K})$ with an additional $\log{1/\epsilon}$ because of the probability of failure, as summarized in the theorem below.
 Note that~\citet{wiebe2016quantum} provided a result about the number of hyperplanes that must be generated to guarantee that at least one of them is in the version space.  Interestingly, this number depends on the margin of the data.
Indeed, it was shown that the number of hyperplanes to be sampled is $K = O\left(\frac{\ln(1/\epsilon)}{\gamma}\right)$.

\begin{algorithm}[t]
\caption{\QPercepSpace~\citep{wiebe2016quantum}}
\label{alg:version_space_quantum_perceptron}
\RestyleAlgo{ruled}
\KwData{data $(x_i,y_i)_{1\leq i\leq N}$ \tcp*[r]{training set}}

    Draw $\{w_1,\ldots,w_K\} \gets \cN(0,\1)$ \tcp*[r]{We assume this is done efficiently}
    \For{$i\in \{1,\ldots,\left\lceil\log_{3/4}(\epsilon)\right\rceil\}$ \tcp*[r]{we increase the probability of \textsc{QSearch}}} {
        $m \gets \textsc{QSearch}(\{w_k\}_{k})$ \tcp*[r]{searching $\!$for$\!$ a$\!$ separator$\!$ $w_m$ that$\!$ correctly$\!$ classifies$\!$ the$\!$ data}
        \If{$y_j w_m^T x_j > 0, \; \forall j$ \tcp*[r]{check if the obtained hyperplane is a good one}} {
            \Return{$w_m$}
        }
    }
    \Return{$w_{1}$} 

\end{algorithm}

\begin{citedthm}[\citep{wiebe2016quantum}]
Let $S$ be a linearly separable sample of $N$ points of margin $\gamma$. Algorithm \QPercepSpace finds a perfect separator with probability at least $1-\epsilon$ and has a complexity of
\begin{equation*}
    O\left(\frac{N}{\sqrt{\gamma}}\log^{3/2}\left(\frac{1}{\epsilon}\right)\right)\; .
\end{equation*}
\end{citedthm}
As we can see, this algorithm does not improve the complexity with respect to the number of the training data $N$; but it has a better statistical guarantee than the classical perceptron, since the classical mistake bound of $O(1/\gamma^2)$ can be improved to $O(1/\sqrt{\gamma})$.
In the next section we propose a quantum perceptron algorithm that has the two advantages of the online and the version space quantum perceptron: it provides improvements in both the computational complexity and the number of mistakes.

\section{Hybrid quantum perceptron: an improved perceptron learning}

This section presents our main results.
We introduce a hybrid quantum perceptron algorithm to take advantage of
the two quantum perceptron models described above.
We show a quadratic improvement over the classical perceptron in both the number of samples and the margin of the data. 
Then, we derive a bound on the expected error of the hypothesis returned by our algorithm.

\subsection{Algorithm}

 The idea is also to draw randomly several linear separators following the normal distribution $\cN(0,\1)$ and then search for one in the version space, so it correctly separates the data. However, in contrast to the \QPercepSpace, our algorithm will perform a quantum search over the training set for each separator to find a solution, and not a quantum search over the separators.
 By doing this, we can improve the complexity with respect to the number of samples $N$, as for the \QPercepOnline, while still enjoying the benefits
of the version space approach. 
 Our hybrid quantum perceptron algorithm is described in Algorithm~\ref{alg:hybrid_quantum_perceptron}.

\begin{citedthm}
\label{th:hybrid_quantum_perceptron}
Let $S$ be a linearly separable sample of $N$ points of margin $\gamma$. Algorithm \QPercep finds a perfect separator with probability at least $1-\epsilon$ and has a complexity of
\begin{equation*}
     O\left(\frac{\sqrt{N}}{\gamma}\ln(1/\epsilon)\ln\left(\frac{1}{\gamma\epsilon}\right)\right)\; .
\end{equation*}
\end{citedthm}
\begin{proof}
See supplementary materials.
\end{proof}
This is a quadratic improvement in the computational and statistical complexity of the classical online perceptron. The improvement of the statistical complexity is quadratic only if we assume that the data supplied to the classical perceptron are provided the same way that the quantum one.
Indeed, the complexity of the classical perceptron in this case is $O(\frac{N}{\gamma^2}\log(\frac{1}{\epsilon\gamma^2}))$ (see~\citet[Th.~1]{wiebe2016quantum}). 
If the classical perceptron is online instead, then the statistical complexity improve from $O((1/\gamma)^2)$ to $O(1/\gamma \ln(1/\gamma))$ which is slightly less than quadratic.
The computational improvement is due to the quantum search while the statistical improvement is provided by our choice of using a version space based strategy, leading to the name ‘hybrid QP'.
Theorem~\ref{th:hybrid_quantum_perceptron} shows that our algorithm is particularly well-suited for large-scale data sets and small margins.

\begin{algorithm}[t]
\caption{\QPercep}
\label{alg:hybrid_quantum_perceptron}
\RestyleAlgo{ruled}
\KwData{data $(x_i,y_i)_{1\leq i\leq N}$ \tcp*[r]{training set}}
\KwData{$\{w_1,\ldots,w_K\} \sim \cN(0,\1)$ \tcp*[r]{hyperplanes}} 

    \For{$i\in \{1,\ldots,K\}$ \tcp*[r]{for all hyperplanes\ldots}} {
        $b \gets 1$ 
        \For{$j\in \{1,\ldots,\left\lceil\log_{3/4}\left(1-\left(1-\frac{\epsilon}{2}\right)^{\frac{1}{K-1}}\right)\right\rceil\}$ \tcp*[r]{increase \textsc{QSearch} success probability}} {
            $m \gets \textsc{QSearch}(\{(x_k,y_k)\}_k)$  \tcp*[r]{searching for a point $x_m$ misclassified by $w_i$}
            \If{$y_m w_i^T x_m \leq 0$ \tcp*[r]{if one is found\ldots}} {
                $b \gets 0$ \tcp*[r]{\ldots then the current hyperplane isn't a good one}
            }
        }
        \If{$b = 1$ \tcp*[r]{if no miclassified point found\ldots }} 
        {\Return{$w_i$} \tcp*[r]{\ldots then return the current hyperplane}
        }
    }
    \Return{$w_{1}$} 

\end{algorithm}

\subsection{Generalization}

In the classical setting, mistake bounds for the Perceptron algorithm  can be used to derive generalization bounds~\citep{cesa2004generalization,mohri2013perceptron}.
This question was not addressed in~\citet{wiebe2016quantum}.
As we have seen above, the \QPercep provides an improvement on the statistical efficiency of the perceptron~($O(1/\gamma)$ instead of  $O(1/\gamma^2)$). We show here that this may yield better generalization guarantees.

 We have a training set $S=\{z_1,\ldots,z_n\}$ with $z_i=(x_i,y_i)$. We assume that $z_i$ are independently sampled from an unknown distribution $\cD$.
We recall that the risk is defined by 
\begin{equation*}
    R(h) = \E_{z\sim \cD} \left(\1\{h(x)\neq y\}\right)\; ,
\end{equation*}
where $h$ is a hypothesis in a hypothesis set $\mathcal{H}$.

\begin{citedthm}
\label{th:generalization_quantum_perceptron}
Assume that the data is linearly separable. Let $h_S$ be the hypothesis returned by the
\QPercep algorithm after training over a sample $S$ of size $N$ drawn according to some distribution $\cD$. Then, the expected error of $h_S$ is bounded as follows:
\begin{equation*}
    \E_{S\sim \cD^N}\left(R(h_S)\right) \leq \sqrt{\frac{\pi}{2}}\frac{\log 1/\epsilon}{N+1}\E_{S\sim \cD^{N+1}}\left(\frac{1}{\gamma_S}\right)\; .
\end{equation*}
\end{citedthm}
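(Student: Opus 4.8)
The plan is to mimic the classical leave-one-out / online-to-batch argument that is used to turn a mistake bound into a generalization bound for the perceptron (as in Cesa-Bianchi et al.\ and Mohri et al.), adapted to the fact that our algorithm is not literally the online perceptron but a version-space search whose output is a perfect separator with probability at least $1-\epsilon$. First I would set up the standard permutation trick: fix a sample $S'=\{z_1,\dots,z_{N+1}\}$ of size $N+1$ drawn i.i.d.\ from $\cD$, and for each index $i$ let $h^{(i)}$ denote the hypothesis returned by running \QPercep on the $N$-element subsample $S'\setminus\{z_i\}$. Because the $z_j$ are i.i.d., the expected risk $\E_{S\sim\cD^N}(R(h_S))$ equals $\E_{S'\sim\cD^{N+1}}\big(\frac1{N+1}\sum_{i=1}^{N+1}\1\{h^{(i)}(x_i)\neq y_i\}\big)$, i.e.\ the expected fraction of ``held-out'' points that the hypothesis trained on the rest misclassifies.

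The key step is then to bound, for a fixed realization of $S'$, the number of indices $i$ for which $h^{(i)}$ errs on $z_i$. Here I would use the version-space structure: with probability at least $1-\epsilon$ (over the algorithm's internal randomness), \QPercep returns one of the sampled hyperplanes $w_m$ that is consistent with all $N$ training points it saw, so the only way $h^{(i)}$ can misclassify the held-out $z_i$ is if \emph{none} of the $K$ sampled hyperplanes is consistent with all of $S'$ — equivalently, if $z_i$ is a point that every separator consistent with $S'\setminus\{z_i\}$ gets wrong. Invoking the earlier result that the number of hyperplanes needed so that at least one lies in the version space is $K=O(\log(1/\epsilon)/\gamma_{S'})$, and that with the inner amplification loop \QPercep succeeds with probability $\ge 1-\epsilon$, one gets that the expected number of bad indices is at most $O(\log(1/\epsilon)/\gamma_{S'})$ plus the $\epsilon$-failure contribution; dividing by $N+1$ and taking expectation over $S'$ yields the claimed bound $\frac{\log(1/\epsilon)}{N+1}\,\E_{S'\sim\cD^{N+1}}(1/\gamma_{S'})$.

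More concretely, the combinatorial heart is the observation that a separator drawn from $\cN(0,\1)$ lies in the version space of $S'$ with probability $\Theta(\gamma_{S'})$, so among $K$ draws the expected count of indices $i$ such that the version space of $S'$ is empty but that of $S'\setminus\{z_i\}$ is not — the indices that can be ``responsible'' for a held-out error — telescopes to $O(1/\gamma_{S'})$ per unit of $\log(1/\epsilon)$; this is the analogue of the statement that the perceptron makes at most $1/\gamma^2$ mistakes, now sharpened to $1/\gamma$ by the version-space viewpoint. I would make this precise by conditioning on the event that \QPercep succeeds, using a union bound over the $N+1$ leave-one-out runs for the failure event, and absorbing lower-order terms.

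The main obstacle I anticipate is handling the algorithm's randomness cleanly: unlike the deterministic classical perceptron, \QPercep can fail (return $w_1$, a possibly-bad separator) with probability up to $\epsilon$, and this failure is correlated across the $N+1$ leave-one-out runs in a way that must not blow up the bound — so care is needed to show the $\epsilon$-term contributes only a lower-order (or absorbed) additive piece rather than an extra $\epsilon N$. A secondary technical point is relating the margin $\gamma_{S'\setminus\{z_i\}}$ that governs the run producing $h^{(i)}$ to the margin $\gamma_{S'}$ appearing in the final bound; since removing a point can only increase the margin, $\gamma_{S'}\le\gamma_{S'\setminus\{z_i\}}$, so $1/\gamma_{S'\setminus\{z_i\}}\le 1/\gamma_{S'}$ and the inequality goes the right way, but this monotonicity needs to be stated. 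I expect the whole argument to fit in a page once the leave-one-out bookkeeping is set up.
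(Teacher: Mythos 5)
Your overall scaffolding matches the paper's: both use the leave-one-out lemma $\E_{S\sim\cD^N}[R(h_S)]=\E_{S'\sim\cD^{N+1}}[\hat{R}_{LOO}(S')]$, bound the number of leave-one-out errors by the number $K$ of sampled hyperplanes, and plug in $K=O(\log(1/\epsilon)/\gamma)$. However, your key combinatorial step is wrong. You claim that, on the success event, $h^{(i)}$ can misclassify the held-out point $z_i$ only if \emph{none} of the $K$ sampled hyperplanes is consistent with all of $S'$. This is false: \QPercep scans the hyperplanes in a fixed order and returns the \emph{first} one it certifies as consistent with its training set. A hyperplane $w_j$ that is consistent with $S'\setminus\{z_i\}$ but misclassifies $z_i$ will be returned on that run if it precedes (in the scan order) any hyperplane consistent with all of $S'$, so leave-one-out errors occur even when the sampled version space of $S'$ is nonempty. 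Worse, if your reduction were taken at face value it would bound the expected number of bad indices by $\epsilon(N+1)$ (probability that all $K$ draws miss the version space, times the number of points), which does not yield the claimed $K/(N+1)$ rate; the subsequent assertion that the count is $O(\log(1/\epsilon)/\gamma_{S'})$ is not derived from anything you established, and the ``telescoping'' remark in your third paragraph does not repair it.

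The argument the paper actually uses (and that you need) is different and purely deterministic given success of the searches: each sampled hyperplane can be responsible for at most one leave-one-out error. Indeed, if $w_j$ is returned on $S'\setminus\{z_i\}$ and errs on $z_i$, then $w_j$ is consistent with $S'\setminus\{z_i\}$ and misclassifies exactly the single point $z_i$ of $S'$; it therefore cannot be consistent with $S'\setminus\{z_{i'}\}$ for any $i'\neq i$, and a hyperplane consistent with all of $S'$ causes no error at all. Hence the number of leave-one-out errors is at most $K$ (the paper phrases this as the worst case where each of the $K-1$ non-perfect hyperplanes misclassifies exactly one point), giving $\hat{R}_{LOO}\leq K/(N+1)$ and then the stated bound. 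Your remaining observations are fine and even slightly more careful than the paper: the margin monotonicity $\gamma_{S'}\leq\gamma_{S'\setminus\{z_i\}}$ is correct (the paper does not mention it), and your worry about the $\epsilon$-failure event is legitimate — the paper simply conditions on the event that a consistent hyperplane exists and all \textsc{QSearch} amplifications succeed (its appendix version of the theorem is stated ``with probability $1-\epsilon$''), rather than folding the failure probability into the expectation. But without the ``one hyperplane, at most one error'' argument, the proposal as written does not establish the bound.
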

\begin{proof}
See supplementary materials.
\end{proof}

\begin{figure*}[t]
    \centering
    
    \subfloat[]{%
    \includegraphics[width=0.35\textwidth]{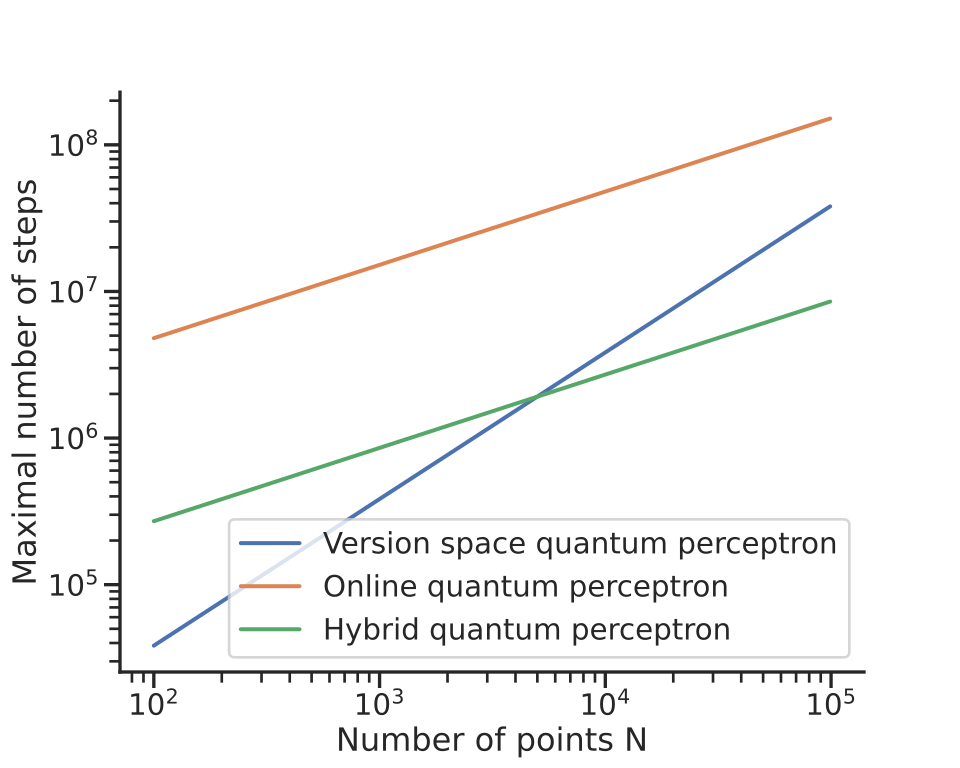}%
    \label{complexities:1}%
    }%
    \qquad\qquad\qquad
    \subfloat[]{%
    \includegraphics[width=0.35\textwidth]{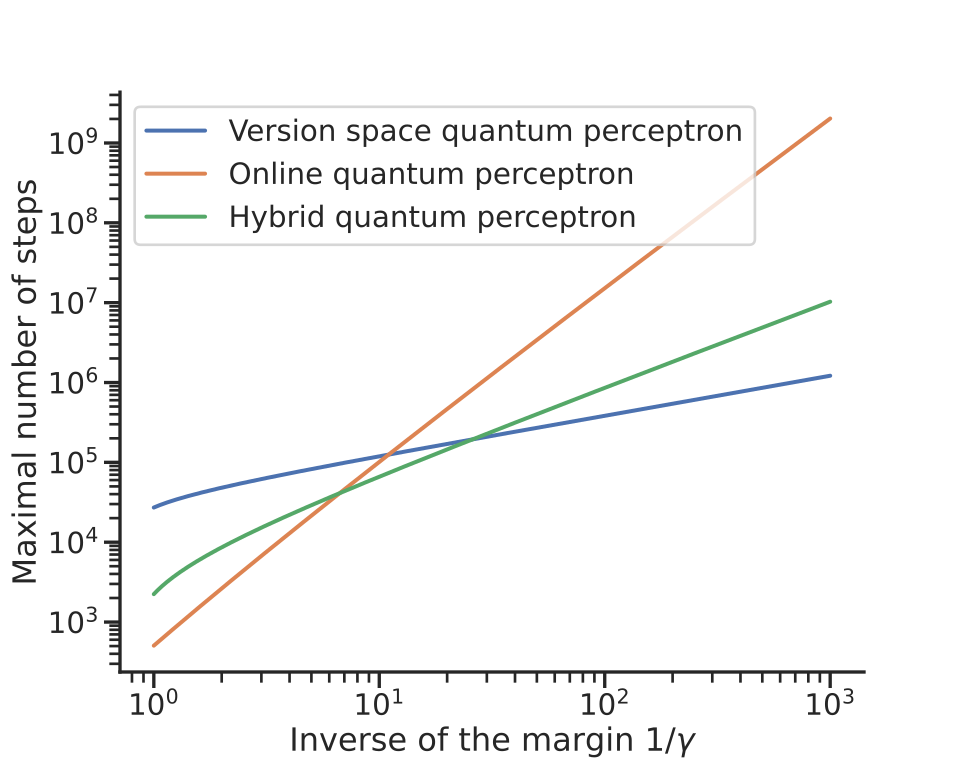}%
    \label{complexities:2}%
    }%
    \caption{Complexity bounds over the number of operations for quantum perceptrons. The curves asymptotically follow the complexities summarized in Table~\ref{tbl:complexities}. Subplot (a) 
    shows the bounds in function of the number of points $N$ with a margin $\gamma = 0.01$. Subplot (b) 
    shows the bounds in function of the inverse of the margin $\gamma$ with a number of points $N=1000$.}
    \label{complexities}
\end{figure*}

The bound obtained in the classical online setting is equal to $\frac{1}{N+1}\E_{S\sim \cD^{N+1}}\left(\min(M(S),{1}/{\gamma_S^2})\right)$, where $M(S)$ is the number of updates made by the algorithm after training over $S$~\citep[Theorem 8.9]{mohri2018foundations}.
Theorem~\ref{th:generalization_quantum_perceptron} shows that \QPercep can give considerable improvement in generalization over the classical online perceptron algorithm. 
However, the guarantee given is not  a high probability bound, since it holds only for the expected error of the hypothesis returned by the algorithm.

\section{Numerical experiments}

In this section, we illustrate empirically the theoretical performance guarantees introduced in the previous section. Then , we discuss the effect of quantum noise  which is one of the major issue of near-term quantum algorithms. The simulations presented here come from a classical computer simulating a quantum algorithm.\footnote{The code to reproduce our experiments is available in a GitHub repository: \url{https://github.com/mroget/Quantum-perceptron-models}.}

\subsection{Computational-statistical trade-off}\label{sec:Computational-statistical trade-off}

We run experiments with the three quantum perceptron models studied in this paper and compare the number of steps required for these algorithms when varying the number of data samples $N$ and the margin $\gamma$.
Figure \ref{complexities} shows the maximal number of steps; namely the complexities taking into account the constants.
 The slope of the curves gives an indication of the complexity in terms of $N$ or $1/\gamma$, while the intercept provides a good indication of the impact of the constant factors on it.
 The slope of the curve of \QPercep is lower than the one of the \QPercepSpace when $\gamma$ is fixed and $N$ varies and also lower than the slope of \QPercepOnline when $N$ is fixed and $\gamma$ varies. This confirms that our algorithm has a lower computational complexity and also a better statistical efficiency.

It is also interesting to compare the behavior of these quantum perceptron algorithms with respect to the number of operations made by the classical online perceptron.
We apply the three quantum perceptron algorithms on the Iris dataset and on a simulated dataset (called Hard). Iris is a simple dataset for which the classical perceptron will converge very quickly. The Hard dataset, however,  is specifically build to force the classical perceptron algorithm to perform a large number of updates. 
\begin{definition}[Hard dataset]

The Hard dataset inspired from \citet[Exercice 8.1]{mohri2018foundations} is composed of a sample $S_H(N) = \{(x_1,y_1),\ldots,(x_N,y_N) \in (\R^N\times \{0,1\})^N$ of size $N$ and dimension $N$ such that
\begin{equation*}
    \forall i,j \in [N]^2,  \; (x_i)_j = (-1)^{i+1}\1\{j=i\} \text{ and } y_i = (-1)^i\; .
\end{equation*}
\end{definition}
Figure~\ref{perf} shows the ratio between the number of operations of each quantum perceptron algorithm and the number of steps of the classical perceptron during the learning phase. 
On the Iris dataset, the three quantum perceptrons behave similarly and are about four times slower than the classical perceptron. This is expected since the problem is easy to solve. For the Hard dataset, however, all the quantum perceptron algorithms shows an improvement over the classical one. Interestingly, \QPercep is the one that performs the best, since it achieves
a good trade-off between computational and statistical complexities.

\begin{figure}[t]
    \centering
    \includegraphics[width=0.53\textwidth]{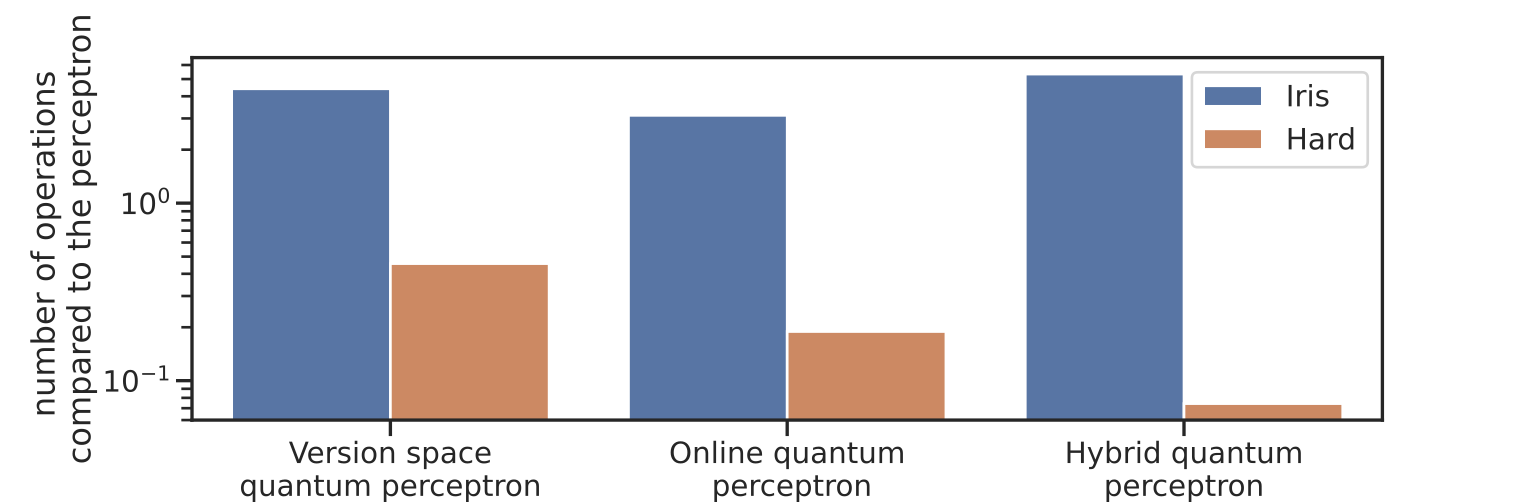}
    \caption{Ratio between the number of operations of quantum perceptron and  classical perceptron.}
    \label{perf}
\end{figure}

\subsection{Quantum noise}\label{sec:Quantum noise}

Most of the existing quantum devices are subject to quantum noise. 
Dealing with noise in quantum computation is nowadays an important and challenging problem.
Although a rigorous analysis goes beyond the scope of this work, here we shortly illustrate how noise may affect the quantum perceptron computational task. 
All quantum algorithms presented in this paper are based on the assumption that the quantum search finds a searched element with probability at least $1/4$. As a reminder (see Section~\ref{sec:preliminaries}), the quantum search is designed by performing $m$ steps of the Grover's algorithm while $m$ is drawn uniformly between $0$ and $M$. We can compute the probability of success of the quantum search with respect to $M$. Let us call this probability $P(M)$. As seen in Section~\ref{sec:preliminaries}, it holds that
\begin{equation*}
    P(M) = \frac{1}{2}\left(\frac{\sin(4M\theta_a)}{2M\sin(2\theta_a)}\right)\;.
\end{equation*}
Here, $\theta_a$ depends only on the proportion of searched elements. Figure \ref{p(m)} shows three curves. Each one is the plot of $P(M)$ for one searched element with a specific quantum noise model. The blue one does not account noise while the other two curves have, respectively, bit-flip, and depolarization noise~\citep{wang2020prospect}. The first class of noise coincides with a unitary random flip, meaning that the computational state flips from $\ket{1}$ to $\ket{0}$ or vice versa. The second kind of error can be seen as a completely positive trace-preserving map from the quantum state onto a linear combination of itself and a general maximally mixed state. 
As we can see, the success probability in a fault-free environment converges towards $1/2$, thus for large enough $M$, $P(M)$ can be always greater than $1/4$, as explained in the section Preliminaries.
However in a faulty-environment, the success probability decreases rapidly and do not tend to a non vanishing constant, making harder to recover a $P(M)$ greater than $1/4$. Moreover the quantum noise strictly depends on the quantum circuit design (in concrete how errors may propagate), making this choice crucial to build a fault-tolerant quantum perceptron.
This decreasing is the result of making too many iterations, thus accumulating noise. On the other hand, the probability starts by increasing because the quantum search is working. The peak of probability represents the best trade-off between the increase of the probability of success and 
the increase of the quantum~noise.

\begin{figure}[t]
    \centering
    \includegraphics[width=0.43\textwidth]{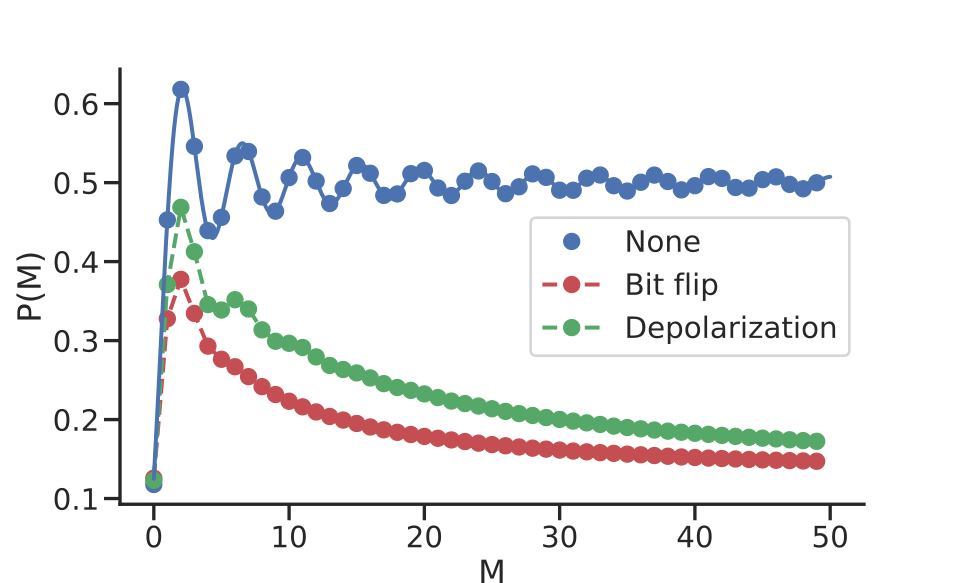}
    \caption{$P(M)$ for different noise models.} 
    \label{p(m)}
\end{figure}

\section{Discussion}

In this work, by classical perceptron we mean the standard online perceptron. There is, to our knowledge, no mention in the classical ML literature to classical version space perceptron. The Quantum Perceptron algorithm we propose has a quadratic improvement in $N$ and $\gamma$ over the well-known classical online perceptron. Similarly, a quartic speed-up is used in~\citet{wiebe2016quantum} to describe the improvement over $\gamma$ they obtained with their quantum version space perceptron.
It is worth noting that, although it is not known in the literature, a classical version space perceptron should have a complexity bound inversely proportional to the margin $\gamma$. The quadratic improvement over the margin is not provided by the Grover's search algorithm but by the version space approach. 
Usually the version space approach scales linearly with the number of examples $N$. The quadratic improvement in $N$ is, however, obtained by our quantum perceptron using a quantum search over the training set.
When adopting a version space approach, the perceptron problem is transformed into a search problem over the generated hyperplanes. Our results show that applying a quantum search over the training set and not over the hyperplanes in this situation provides new insights for the design of computationally and statistically efficient perceptron models.

To our knowledge, our Theorem~\ref{th:generalization_quantum_perceptron} is the first result showing that the version space perceptron (classic or quantum) can have a better generalization than the online perceptron algorithm.
There are no results concerning the expected risk of previous quantum perceptron algorithms.
%
We expect that the expected risk bound of \QPercepOnline is of the same order than the classical online perceptron, since this algorithm does not improve the mistake bound.
For \QPercepSpace, it is not clear whether the improvement on the scaling of the  algorithm with respect to the margin could yield even better generalization guarantees. The factor $1/\gamma$ in the expected risk bound of \QPercep is related to the number of  the randomly generated linear separators~(see the proof of Theorem~\ref{th:generalization_quantum_perceptron}). The version space quantum perceptron has the same number of separators than our algorithm. So, using the same line of proof as for Theorem~\ref{th:generalization_quantum_perceptron} will not necessarily result in an improved bound.

In this paper we only considered linear classification. In the classical case, kernel methods provide a powerful tool for generalizing linear classifiers to nonlinear settings~\citep{scholkopf2002learning}.
With appropriate nonlinear features, linear models can be used to approximate a nonlinear function. Kernel methods allow the construction of these nonlinear features.
There are interesting links between kernel methods and quantum computing~\citep{havlivcek2019supervised,schuld2019quantum}. Indeed, the process of encoding inputs in a quantum state can be interpreted as a nonlinear feature map that maps data to a quantum Hilbert space. So, the quantum encoding of classical data can be seen as a way to construct nonlinear quantum features. Different quantum encodings were proposed and the corresponding kernels were given.
Nonlinear extensions of our work can be carried out by the classical-to-quantum data encoding scheme.

\section{Conclusion}
In this paper, we proposed a hybrid quantum perceptron algorithm that goes beyond  the ideas of existing quantum perceptron algorithms. This allowed us to obtain a quadratic improvement over the computational complexity and the statistical efficiency compared to the classical online perceptron. 
We performed numerical experiments to  support our theoretical findings.
In the future, it would be valuable to study noise-robust models for quantum perceptron.

\begin{acknowledgements}
We thank L. Ralaivola for useful discussions. This work has been funded by the French National Research Agency~(ANR) project QuantML~(grant number ANR-19-CE23-0011) and the INS2I CNRS project QuAlgo. 
\end{acknowledgements}



\appendix
\setcounter{theorem}{3}

\section{Proofs}\label{sec:proofs}
In this appendix, we present the proofs of Theorems \ref{proof:hybrid_quantum_perceptron} and \ref{proof:generalization_quantum_perceptron}.

\subsection{Proof of Theorem \ref{proof:hybrid_quantum_perceptron}}
\label{sec:proof1}

After proving a few useful lemma, we provide here the proof of the complexity of our \QPercep.

\begin{lemma}\label{lemma:K}
Let's define 
$K = \left\lceil\frac{\ln(\epsilon/2)}{\ln(1-\sqrt{2}\gamma/\sqrt{\pi})}\right\rceil$, then it holds that 
$$ K \sim \sqrt{\frac{\pi}{2}} \frac{\ln(1/\epsilon)}{\gamma}. $$
\end{lemma}
\begin{proof}
Using a Taylor expansion for $\ln(1-x)$ in 0 we get

\begin{align*}
    \sqrt{\pi/2}\frac{\ln(1/\epsilon)}{K\gamma} &= \sqrt{\pi/2}\frac{\ln(1/\epsilon)\ln(1-\sqrt{2}\gamma/\sqrt{\pi})}{\gamma\ln(\epsilon/2)}\\
    &=  \sqrt{\pi/2}\frac{\ln(1/\epsilon)\left[-\sqrt{2}\gamma/\sqrt{\pi} + \underset{\gamma \to 0}{o}(\gamma)\right]}{\gamma\ln(\epsilon/2)}\\
    &\underset{\gamma \to 0}{\to} \frac{\ln(1/\epsilon)}{\ln(1/\epsilon) + \ln(2)}\\ &\underset{\epsilon \to 0}{\to} 1.
\end{align*}

Thus $K \sim \sqrt{\pi/2}\frac{\ln(1/\epsilon)}{\gamma}$.
\end{proof}

\begin{lemma}\label{lemma:K2}
Let's define 
$K2 = \left\lceil\log_{3/4}\left(1-\left(1-\frac{\epsilon}{2}\right)^{\frac{1}{K-1}}\right)\right\rceil$, then it holds that 
$$ K2 \sim \log_{3/4}(\epsilon\gamma). $$
\end{lemma}
\begin{proof}
Using a Taylor expansion for $\ln(1-\epsilon/2)$ and $\ln(1-\sqrt{\frac{2}{\pi}}\gamma)$ in 0 we get

$$(1-\epsilon/2)^{\frac{1}{K-1}} = \exp\left(\frac{\ln(1-\epsilon/2)\ln(1-\sqrt{\frac{2}{\pi}}\gamma)}{\ln(\epsilon/2)-\ln(1-\sqrt{\frac{2}{\pi}}\gamma)}\right)
= \exp\left(-\alpha\right) $$

where 

$$ \alpha = \frac{1}{\sqrt{2\pi}}\frac{\epsilon\gamma}{\ln(\epsilon/2)-\ln(1-\sqrt{\frac{2}{\pi}}\gamma)} + o(\epsilon\gamma) \sim  \frac{1}{\sqrt{2\pi}}\frac{\epsilon\gamma}{\ln(\epsilon/2)-\ln(1-\sqrt{\frac{2}{\pi}}\gamma)}.$$

Using $\ln(1-e^{-x}) \underset{x\to 0}{\sim} \ln(x)$, it holds that

$$ K_2 =\log_{3/4}\left(1-e^{-\alpha}\right) \sim \log_{3/4}(\alpha) \sim \log_{3/4}(\epsilon\gamma). $$
\end{proof}

\begin{theorem}
\label{proof:hybrid_quantum_perceptron}
Let $S$ be a linearly separable sample of $N$ points of margin $\gamma$. Algorithm \QPercep finds a perfect separator with probability at least $1-\epsilon$ and has a complexity of
\begin{equation*}
     O\left(\frac{\sqrt{N}}{\gamma}\ln(1/\epsilon)\ln\left(\frac{1}{\gamma\epsilon}\right)\right)\; .
\end{equation*}
\end{theorem}
\begin{proof}
The algorithm can fail because of two reasons. It is possible that none of the hyperplanes $w_i$, $i=1,\ldots,K$, separate the classes and it is also 
possible that the quantum search gives a wrong result.\\
The exact value of $K$ we take is $K = \left\lceil\frac{\ln(\epsilon/2)}{\ln(1-\sqrt{2}\gamma/\sqrt{\pi})}\right\rceil  = O\left(\frac{\ln(1/\epsilon)}{\gamma}\right)$ because of lemma \ref{lemma:K2}. The probability that a randomly drawn hyperplane separates the data is $\sqrt{2/\pi}\gamma$ (from \citealp[Proof of theorem 2]{wiebe2016quantum}). Thus, the probability that at least one hyperplane separates the classes is 
\begin{align*}
    \P(\text{separating $w$ exists}) &= 1- \left(1-\sqrt{\frac{2}{\pi}}\gamma\right)^K
    \geq \left(1-\sqrt{\frac{2}{\pi}}\gamma\right)^{\frac{\ln(\epsilon/2)}{\ln(1-\sqrt{2}\gamma/\sqrt{\pi})}} = 1-\frac{\epsilon}{2}\; .
\end{align*}
Next we will assume that one of the $K$ hyperplanes separates the classes. The algorithm will still return a wrong answer if it identifies a non-separating hyperplane as a separating one. The worst case is when the separating hyperplane is the $K^{\text{th}}$ one. The probability that $K-1$ non-separating hyperplanes are all correctly identified is 
\begin{equation*}
    \left( 1 - \frac{3}{4}^{K_2}\right)^{K-1} \geq 1-\frac{\epsilon}{2}\; ,
\end{equation*}

where 
\begin{equation*}
    K_2 = \left\lceil\log_{3/4}\left(1-\left(1-\frac{\epsilon}{2}\right)^{\frac{1}{K-1}}\right)\right\rceil  = O\left(\ln(1/(\gamma\epsilon))\right)\; \text{(from lemma \ref{lemma:K2})}.
\end{equation*}

The probability of failure is then bounded by 
\begin{equation*}
    \P(\text{failure}) \leq \underbrace{\frac{\epsilon}{2}}_{\text{separating $w$ doesn't exist}} + \underbrace{\frac{\epsilon}{2}}_{\text{one non-separating hyperplane misidentified}} = \epsilon
\end{equation*}
and the complexity is 
\begin{equation*}
    O\left(K K_2 \sqrt{N}\right) = O\left(\frac{\sqrt{N}}{\gamma}\ln(1/\epsilon)\ln\left(\frac{1}{\gamma\epsilon}\right)\right)
\end{equation*}
which concludes the proof.
\end{proof}

\subsection{Proof of Theorem \ref{proof:generalization_quantum_perceptron}}
\label{sec:proof2}

For proving Theorem \ref{proof:generalization_quantum_perceptron}, the following definition and lemma are useful.

\begin{definition}
We define the Leave-one-out (LOO) error on a dataset $S$ by %
\begin{equation}
    \label{eq:rloo}
    \hat{R}_{LOO}(S) = \frac{1}{N}\sum_{i=1}^N{\1\{h_{S-\{x_i\}}(x_i)\neq y_i\}}\; ,
\end{equation}
where $h_{S-\{x_i\}}$ is the hypothesis returned by \QPercep on $S-\{x_i\}$, which is the same
as $S$ except that $x_i$ has been deleted.
\end{definition}

The lemma below shows the link between the expected risk and the Leave-one-out error.
\begin{lemma}[{\citealp[Lemma 5.3]{mohri2018foundations}}]\label{lemma:equal_risk}
For any $N\geq 1$,
\begin{equation*}
    \label{eq:equal_risks}
    \underset{S\sim \cD^N}{\E} \left[R(h_S)\right] = \underset{S'\sim \cD^{N+1}}{\E} [\hat{R}_{LOO}(S')]\; .
\end{equation*}
\end{lemma}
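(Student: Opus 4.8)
The plan is to prove the identity by starting from the right-hand side, unfolding the definition of the leave-one-out error, and then exploiting the i.i.d.\ structure of the data to collapse the $N+1$ summands into a single expectation. First I would write $S' = (z_1,\ldots,z_{N+1})$ with $z_i = (x_i,y_i)$ drawn i.i.d.\ from $\cD$, and apply linearity of expectation to
\[
\E_{S'\sim\cD^{N+1}}[\hat{R}_{LOO}(S')] = \frac{1}{N+1}\sum_{i=1}^{N+1}\E_{S'\sim\cD^{N+1}}\left[\1\{h_{S'-\{z_i\}}(x_i)\neq y_i\}\right].
\]

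Next I would argue that every summand is equal. For each fixed index $i$, the $N$ points forming $S'-\{z_i\}$ are i.i.d.\ from $\cD$ and are independent of the held-out point $z_i$, which is itself an i.i.d.\ draw from $\cD$. Hence, relabeling $S := S'-\{z_i\}$ as a sample of size $N$ and $z := z_i = (x,y)$ as an independent test point, each term equals $\E_{S\sim\cD^N,\,z\sim\cD}[\1\{h_S(x)\neq y\}]$, a quantity that does not depend on $i$. Because there are $N+1$ identical terms, the prefactor $1/(N+1)$ cancels the sum and we are left with a single copy of this expectation.

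Finally I would collapse the inner test-point expectation using the tower property: conditioning on $S$ and integrating over the independent $z$ first gives $\E_{z\sim\cD}[\1\{h_S(x)\neq y\}] = R(h_S)$ by the definition of the risk, so that $\E_{S\sim\cD^N,\,z\sim\cD}[\1\{h_S(x)\neq y\}] = \E_{S\sim\cD^N}[R(h_S)]$, which is exactly the left-hand side.

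The hard part will be the symmetry step: I must be careful that the equality of the summands rests only on the data being i.i.d.\ --- so that the pair $(S'-\{z_i\},\,z_i)$ is, for every $i$, distributed as a size-$N$ training sample together with an independent test point --- and not on any permutation-invariance of the learning rule itself. I would also note that \QPercep is randomized through the pre-drawn hyperplanes $w_1,\ldots,w_K$; since that internal randomness is independent of $S'$, it can be folded into the expectations throughout without altering any of the steps.
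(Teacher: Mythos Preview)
Your argument is correct and is exactly the standard symmetrization proof of this identity: expand $\hat{R}_{LOO}$ by linearity, use the i.i.d.\ structure to see that each of the $N{+}1$ summands has the same law as $\1\{h_S(x)\neq y\}$ with $S\sim\cD^N$ independent of $(x,y)\sim\cD$, and then identify the inner expectation with $R(h_S)$.

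Note, however, that the paper does not supply its own proof of this lemma: it is stated with a citation to \cite[Lemma~5.3]{mohri2018foundations} and used as a black box in the proof of Theorem~\ref{th:generalization_quantum_perceptron}. Your write-up is essentially the proof given in that reference, so there is nothing to contrast. Your closing remarks about the internal randomness of \QPercep (the pre-drawn hyperplanes) being independent of $S'$ are a useful observation that the paper leaves implicit.
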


\begin{theorem}
\label{proof:generalization_quantum_perceptron}
Assume that the data is linearly separable. Let $h_S$ be the hypothesis returned by the
\QPercep algorithm after training over a sample $S$ of size $N$ drawn according to some distribution $\cD$. Then, the expected error of $h_S$ is bounded as follows:
\begin{equation*}
    \E_{S\sim \cD^N}\left(R(h_S)\right) \leq \sqrt{\frac{\pi}{2}}\frac{\log 1/\epsilon}{N+1}\E_{S\sim \cD^{N+1}}\left(\frac{1}{\gamma_S}\right)\; .
\end{equation*}
\end{theorem}
\begin{proof}
The proof is based on computing an upper bound of the Leave-one-out error. 
%
Since the hyperplanes are drawn beforehand, they are the same for all instances $(S-\{x_i\})_i, \forall i = 1,\ldots,N$. We also assume that there is at least one hyperplane that separates the training set $S$ of size $N$ (true with probability $1-\epsilon$). If $N\leq K$ then  the number of errors in $\hat{R}_{LOO}$ is naturally bounded by $N\leq K$ so it holds that $\hat{R}_{LOO}  \leq K/N$. Thus we can restrict ourselves to the non trivial case where $K<N$.

We know that there is an hyperplane that separates the training set $S$ correctly. Apart this hyperplane, noted $w_K$, the worst scenario is when the other ones all classify correctly all the data except one.
Without loss of generality we consider that each $w_k \text{ misclassifies only } x_k$, $\forall 1\leq i <K$.
So we will have one error for each of the $K-1$ first predictions. 
Now, when \QPercep is trained on $S-\{x_i\}$,  $\forall K\leq i \leq N$, the algorithm will choose the hyperplane $w_K$ because it is the only one that correctly separates  $S-\{x_i\}$ for $i=K,\ldots,N$.
Since $w_K$ is the hyperplane returned by \QPercep on all the sample $S$, it will also correctly classify the points $x_i$, $\forall K\leq i \leq N$. Hence 
it holds that 
\begin{equation*}
    \hat{R}_{LOO} \leq \frac{K}{N}\; .
\end{equation*}
Using Lemma \ref{lemma:equal_risk} and $K \sim \sqrt{\frac{\pi}{2}} \frac{\ln(1/\epsilon)}{\gamma}$ (lemma \ref{lemma:K}), we obtain
\begin{equation*}
    \E_{S\sim \cD^N}\left(R(h_S)\right) \leq \sqrt{\frac{\pi}{2}}\frac{\log 1/\epsilon}{N+1}\E_{S\sim \cD^{N+1}}\left(\frac{1}{\gamma_S}\right)\; .
\end{equation*}
\end{proof}

\bibliography{biblio}

\end{document}